\pgfplotsset{compat=1.8}
\definecolor{dark_orange}{rgb}{1.00000,0.55000,0.00000}%
\definecolor{dark_green}{rgb}{0.40000,0.70000,0.40000}%
\definecolor{gold}{rgb}{1.00000,0.84000,0.00000}%
\definecolor{dark_cyan}{rgb}{0,0.8941,0.8941}%
\tikzstyle{invdot} = [circle,minimum size=0mm,inner sep=0pt,outer sep=0pt]
\newcommand{\agentcyan}{\tikz{\node [invdot] at (0,0){};\draw [ultra thick,color=dark_cyan] (0,0.1) -- (.325,.1);}}
\newcommand{\agentgreen}{\tikz{\node [invdot] at (0,0){};\draw [ultra thick,color=dark_green] (0,0.1) -- (.325,.1);}}
\newcommand{\agentred}{\tikz{\node [invdot] at (0,0){};\draw [ultra thick,color=red] (0,0.1) -- (.325,.1);}}
\newcommand{\agentdashred}{\tikz{\node [invdot] at (0,0){};\draw [ultra thick, dashed,color=red] (0,0.1) -- (.325,.1);}}
\newcommand{\agentgold}{\tikz{\node [invdot] at (0,0){};\draw [ultra thick,color=gold] (0,0) -- (.35,0);\node[thick,color=gold,scale=1.25] at (0.175,0) {\pgfuseplotmark{triangle*}};}}
\newcommand{\agentdashgold}{\tikz{\node [invdot] at (0,0){};\draw [ultra thick,color=gold, dashed] (0,0) -- (.5,0);\node[thick,color=gold,scale=1.25] at (0.25,0) {\pgfuseplotmark{triangle*}};}}
\newcommand{\agentblue}{\tikz{\node [invdot] at (0,0){};\draw [ultra thick,color=blue] (0,0.1) -- (.325,.1);\draw[very thick,color=blue] (0.1625,0.1) circle (2pt);}}
\newcommand{\synchtraj}{\tikz{\node [invdot] at (0,0){};\draw [ultra thick, dashed,color=gray, draw opacity=0.7] (0,0.1) -- (.325,.1);}}
\newcommand{\tp}{\textnormal{\textsf{T}}}
\newcommand{\set}[1]{\mathcal{#1} }
\renewcommand{\vec}[1]{\mbox{\boldmath{$#1$}}}
\newcommand{\mat}[1]{\mbox{\boldmath{$#1$}}}
\DeclareMathAlphabet\mathbb{U}{fplmbb}{m}{n}
\newcommand{\vecgrk}[1]{\mbox{\boldmath{$#1$}}}
\newcommand{\rank}[1]{\text{rank}\left(#1\right)}
\newcommand{\card}[1]{\text{card}\left(#1\right)}
\newcommand{\trace}[1]{\text{trace}\left(#1\right)}
\newcommand{\circledtext}[1]{{\large\textcircled{\normalsize \texttt{#1}}}}
\DeclareMathAlphabet\mathbb{U}{fplmbb}{m}{n}
\newcommand{\RZ}{\mathbb{R}}
\renewcommand{\Re}[1]{\mathfrak{Re}\{#1\}}
\renewcommand{\Im}[1]{\mathfrak{Im}\{#1\}}
\def \Gg {\set{G}}                       
\def \Vg {\set{V}_{\Gg}}                       
\def \Eg {\set{E}_{\Gg}}                       
\def \Lg {\mat{L}_\Gg}                    
\def \aji {a_{\Gg_{ji}}}                   
\def \aij {a_{\Gg_{ij}}}
\def \lij {l_{\Gg_{ij}}}		
\def \sumjN {\sum_{j=1}^N}              
\def \zero {\vec{0}}                    
\def \tauvec {\vecgrk{\tau}}
\def \x {\vec{x}}                       
\def \xol {\overline{\x}}               
\def \u {\vec{u}}                       
\def \uol {\overline{\u}}               
\def \y {\vec{y}}                       
\def \yol {\overline{\y}}               
\def \e {\vec{e}}						
\def \yt {\widetilde{\vec{y}}}				
\def \mPi {\mat{\Pi}}                   
\def \mGamma {\mat{\Gamma}}             
\def \mTheta {\mat{\Theta}} 			
\def \N {\mat{N}}
\def \M {\mat{M}}
\def \Q {\mat{Q}}                       
\def \R {\mat{R}}                       
\def \G {\mat{G}}                       
\def \E {\mat{E}}                       
\def \Z {\mat{Z}}                       
\def \X {\mat{X}}                       
\def \A {\mat{A}}                       
\def \Aol {\overline{\A}}               
\def \B {\mat{B}}                       
\def \Bol {\overline{\B}}               
\def \C {\mat{C}}                       
\def \Col {\overline{\C}}               
\def \nol {\overline{n}}                
\def \mol {\overline{m}}                
\def \K {\mat{K}}                       
\def \Kol {\overline{\K}}               
\def \P {\mat{P}}                       
\def \Pol {\overline{\P}}                       
\def \I {\mat{I}}                       
\def \Ool {\overline{\Omega}}           
\newcommand{ \ma}{\begin {bmatrix}}
\newcommand{\me }{ \end {bmatrix}}
\newcommand{ \mra}{\begin {pmatrix}}
\newcommand{\mre }{ \end {pmatrix}}
\newcommand{\figref}[1]{Fig.~\ref{#1}}
\newcommand{\thmref}[1]{Theorem~\ref{#1}}
\newcommand{\optref}[1]{OP.~\ref{#1}}
\newcommand{\secref}[1]{Section~\ref{#1}}
\newcommand{\tabref}[1]{Table~\ref{#1}}
\newcommand{\defref}[1]{Definition~\ref{#1}}
\newcommand{\defrefv}[2]{\defref{#1}.\ref{#2})}
\newcommand{\assref}[1]{Asmp.~\ref{#1}}
\newcommand{\remref}[1]{Remark~\ref{#1}}
\newcommand{\lemref}[1]{Lemma~\ref{#1}}
\newcommand{\algoref}[1]{Algorithm~\ref{#1}}
\newtheorem{lemma}{Lemma}
\newtheorem{theorem}{Theorem}
\newtheorem{defn}{Definition}
\newtheorem{optprob}{Optimization Problem}
\title{\LARGE \bf
Optimal Stationary Synchronization of Heterogeneous\\ Linear Multi-Agent Systems
}
\author{Sebastian Bernhard, Saman Khodaverdian and J\"urgen Adamy
\thanks{The authors are with Institute of Automatic Control and Mechatronics; Control Methods and Robotics Lab,
        Technische Universit\"at Darmstadt, Landgraf-Georg Str. 4, 64283 Darmstadt, Germany,
        {\{\tt\small bernhard, saman.khodaverdian, adamy\}}@rmr.tu-darmstadt.de}%
}
\begin{document}

\maketitle
\thispagestyle{empty}
\pagestyle{empty}

\begin{abstract}
In this paper, we address the output synchronization of heterogeneous linear networks. In the literature, all agents are typically required to synchronize exactly to a common trajectory. Here, we introduce optimal stationary synchronization (OSS) instead which permits non-zero steady-state synchronization errors. As a benefit, we are able to relax standard requirements. E.g., agents are allowed to participate in the network even when they usually cannot synchronize exactly. In addition, OSS enables agents to save input-energy by synchronizing within tolerable error-bounds. Our new method combines the synchronization of bounded exosystems with local infinite-time linear quadratic tracking (LQT). This results in an optimal balance of each agent's synchronization error versus its consumed input-energy. Moreover, we extend recent results in LQT such that the derived time-invariant optimal control guarantees that the synchronization error satisfies given strict bounds. All these aspects are demonstrated by an illustrative simulation example with a detailed analysis.
\end{abstract}

\section{Introduction}
This paper considers the output synchronization problem for linear heterogeneous \emph{multi-agent systems} (MAS). MAS play an important role in various research areas \cite{Olfati-SaberFaxMurray2007, RenBeardAtkins2007}.

An internal model principle has proven to be necessary and sufficient for synchronization \cite{WielandSepulchreAllgoewer2011}. Loosely speaking, some part of the agents' dynamics has to be identical, which is not satisfied for heterogeneous agents  in general.
One way for solving this problem is to homogenize the agents by local feedback and then to achieve synchronization with the help of classical methods, cf. e.g. \cite{Khodaverdian2014-ifacWC, Khodaverdian2015-cdc, YangSaberiStoorvogelGrip2014}.
However, such approaches are limited in their applicability.
Alternatively, it is possible to include identical virtual exosystems into a dynamic control strategy.
These define a mutual objective of all agents. Then the homogeneous exosystems are synchronized to the synchronization trajectory $\overline{\y}(t)$.
Hence, the problem of \textit{exact synchronization} (EXS) reduces to a local trajectory tracking task, i.e. the synchronization error has to vanish: $\lim_{t \to \infty}\yt_i(t)=\lim_{t \to \infty}\big(\y_i(t) - \yol(t)\big)=\zero$.
First results of this approach were carried out by \cite{KimShimSeo2011}, \cite{WielandSepulchreAllgoewer2011}. 

In this paper, we consider the question: Is such an exact synchronization (EXS) always meaningful or necessary in heterogeneous multi-agent systems?

We believe the answer is: No. Especially for heterogeneous networks, the requirement of EXS can be quite restrictive. E.g., suppose that an agent is incapable of achieving a desired objective. If EXS is forced then all agents will have to synchronize to a common trajectory necessarily differing from the objective, cf. \cite{WielandSepulchreAllgoewer2011}. 
Motivated by biological considerations, it sounds more natural to us that such ``weak'' individuals try to follow the objective as best as they can instead of dictating all other agents to fail to do so. Moreover, it is easy to think of situations when agents have to consider additional requirements. E.g., saving energy in order to be able to participate in the network for a given time period. From this practical point of view, a synchronization within defined acceptable bounds seems more reasonable. Then the key question is: How can this new degree of freedom be used for an optimized performance of each agent without increasing the complexity of the control structure?

In the literature, however, little has been done so far and many results are similar to EXS at heart. E.g., an $\mathcal H_\infty$-Norm ``almost synchronization'' is presented in \cite{Peymani2014} where EXS is assumed in absence of disturbances. Or, ``practical synchronization'' is introduced in \cite{Montenbruck2015} which requires that arbitrarily small bounds on synchronization errors are implementable. The same is true for ``funnel synchronization'' \cite{Shim2015}. Altogether, it is not covered how weakening the requirement of EXS can be exploited to the benefit of the agents.

In this context, we propose a linear-quadratic tracking (LQT) approach for \textit{optimal stationary synchronization} (OSS) of heterogeneous agents. It relies on recent results in infinite-time LQT \cite{Bernhard2017b}. For the first time, to the best of our knowledge, we will present a local, time-invariant optimal control with respect to quadratic cost
\begin{equation} \label{eq_cost}
\hspace{-.25cm}J_{t_\text{f}}\big(\yt_{i}(\cdot),\u_{i}(\cdot)\big)\hspace{-.075cm}=\hspace{-.075cm}\tfrac{1}{2}\hspace{-.15cm}\int_0^{t_\text{f}}\hspace{-.2cm} \yt_{i}(t)^\tp \Q_i \yt_{i}(t)+\u_i^\tp(t)\R_i \u_i(t) \hspace{-.075cm}\text{ d}t\hspace{-.15cm}
\end{equation}
on infinite horizons $t_{\text{f}}\rightarrow\infty$, for which $\lim_{t\to\infty}\yt_{i}(t)\neq\zero$. We suppose the weights $\Q_i\succ0$ and $\R_i\succ0$ are additional design parameters. These will allow each agent to balance the importance of synchronization versus input-energy consumption individually -- even when EXS is infeasible, e.g. due to under-actuation for less inputs than outputs.

Notice that finding an optimal control is not a trivial task since $\lim_{t_{\text{f}}\to\infty}J_{t_{\text{f}}}(\cdot)=\infty$ for any $\u_{i}(\cdot)$ in general~\cite{Anderson2007}. Nevertheless, under reasonable assumptions on infinite horizons, i.e. bounded $\yol(t)$, \cite{Bernhard2017b} derives a time-invariant control which is proven to be strongly optimal considering an equivalent LQT problem. This forms the basis of our approach. We will carry out some modifications to adapt the results to MAS, e.g. a definition of stationary optimality at the end of \secref{sec_mas}. Then we are ready to achieve OSS in \secref{sec_opt_syn}. Exploiting results in \cite{Bernhard2016}, we are also able to introduce a parametric optimization problem (OP) whose solution satisfies the algebraic equations in~\cite{Bernhard2017b}.

As discussed above, a certain bound on the $j$-th component of the synchronization error: $\vert \widetilde y_{ij} \vert \leq \epsilon_{ij}$ is often desired. To this end, we will introduce an OP in \secref{sec_opt_err_syn}  which constitutes an inverse problem in a wider sense. Meaning that the goal is to obtain a $\Q_i$ which leads to an input-energy efficient optimal control so that given feasible bounds are satisfied. We call this an \textit{error-bounded} OSS (EBOSS). Here, the objective function will be motivated by the OP previously mentioned. The OP in question involves bilinear and linear  matrix (in)equalities (BMI, LMI); hence, an efficient path-following algorithm, e.g. see \cite{Ostertag2008}, is implemented.
 
Summarizing, our \textbf{novel contribution} is: Based on a dynamic control strategy, the synchronization of the agents' identical exosystems gives a desired common synchronization trajectory $\yol(t)$. Then, considering each agent's synchronization error $\yt_{i}(t)=\y_i(t)-\yol(t)$, we derive a local time-invariant control $\u_i^*(\cdot)$ from algebraic equations or parametric optimization, which
\begin{enumerate}
	\item[C1)] leads to optimal stationary synchronization with respect to cost \eqref{eq_cost}, $t_{\text{f}}\rightarrow\infty$ for any initial conditions of the agents' dynamics and exodynamics (OSS)
	\item[C2)] and can be obtained for quadratic, over- and under-actuated agents as well as under relaxed assumptions.
	\item[C3)] guarantees error-bounded OSS, i.e. given error-bounds $\vert \widetilde y_{ij} \vert \leq \epsilon_{ij}$, $\forall j$ are additionally satisfied for all relevant initial values of the agents' exosystems. (EBOSS)
\end{enumerate}
The paper is structured as follows: First, the framework of MAS along with assumptions and an optimality definition are presented in \secref{sec_mas}. Second, C1-2) and C3) are derived on a local level in \secref{sec_opt_syn} and \ref{sec_opt_err_syn}, respectively. This underlines that our results can be generalized for tracking tasks involving exosystems. Before our final conclusions, simulation results in \secref{sec_sim} account for C1-3).

\textit{Mathematical notations:}
The zero and identity matrix have appropriate dimensions if not stated explicitly: $\zero_{a\times b}$ or $\I_a$. A matrix $\vec{M}$ is positive (semi-)definite if $\vec{M}\succ(\succeq)\zero$.
The number of unique elements of a multiset $\Omega$ is given by $\card{\text{supp}(\Omega)}$ and for an element $k \in \Omega$ the multiplicity is $m_\Omega(k)$. The unit vector $\e_i$ of appropriate length has $i$-th element equal to one, zero else. $\mathbb{Q}$ denotes the set of rational numbers. The convex hull of a set of vectors $\set{X}$ is $\text{conv}(\set{X})$. By $\text{diag}(\A,\B,\ldots)$, we define a block-diagonal matrix.
\section{HETEROGENEOUS LINEAR MULTI-AGENT-SYSTEMS}
\label{sec_mas}

In this section, we present the structure of the MAS and give the agents' dynamics and necessary assumptions. Then, the synchronization gain for the homogeneous exosystems is determined. Finally, we introduce the important definition of optimal stationary synchronization (OSS).

Furthermore, we have to give technical requirements for the structure of the exosystem in \secref{sec_problem} and for the set of initial values of the exosystems in \secref{sec_sync}. Since these are not necessary to understand the main results they may be skipped at first. For understanding of the proofs and for implementation,  they should be closely followed. The context should be clearer after studying \secref{sec_sim}.

\subsection{System Setup}
\subsubsection{Graph Theory}
\label{sec_graph}

We model the information exchange in the multi-agent system by a time-invariant \emph{directed graph} $\Gg = (\Vg,\Eg)$. The $i$-th agent in the network is represented by vertex $i \in \Vg = \{1,\ldots,N\}$, $N < \infty$. Agent $j$ receives information from agent $i$ if the edge $(i,j) \in \Eg$ exists. 
A \emph{Laplacian matrix} describes the communication network \cite{Olfati-SaberFaxMurray2007} and is
defined as $\Lg = [\lij] \in \RZ^{N \times N}$ with
\begin{equation*}
 \lij = \begin{cases}
	 \sum_{k=1}^N a_{\Gg_{ki}}, & i=j, \\ -\aji, & i \ne j, 
       \end{cases}
 \quad 
 \aij = \begin{cases}
	 1, & (i,j) \in \Eg, \\ 0, & (i,j) \not \in \Eg.
       \end{cases}
\end{equation*}
 
\begin{defn} \label{def_connectedness}
 A directed graph $\Gg = (\Vg,\Eg)$ contains a \emph{directed spanning tree} if there exists at least one vertex that can reach every other vertex, using the edges contained in the set $\Eg$. 
\end{defn}

It can be shown that a directed spanning tree exists if and only if $\Lg$ has a simple eigenvalue in zero \cite{LafferriereWilliamsCaughmanVeerman2005}, 
i.e. $\lambda_1(\Lg)=0$ and $\lambda_i(\Lg) \ne 0$ for $i \in \{2,\ldots,N\}$.

\subsubsection{Agent Dynamics \& Assumptions}
\label{sec_problem}

We consider a heterogeneous network of $N$ agents, with $i$-th agent 
\allowdisplaybreaks
\begin{subequations} \label{eq_agent_dyn}
\begin{align} 
   \dot{\x}_i &= \A_i \x_i + \B_i \u_{i} ,  \label{eq_het_dyn}\\   
         \y_i &= \C_i \x_i ,  \\
         \u_i &= -\K_i (\x_i - \mPi_i \xol_i) + \mGamma_i \xol_i ,  \label{eq_u-track}  \\
 \dot{\xol}_i &= \Aol \xol_i + \Bol \uol_i ,	\label{eq_hom_sys} \\
 \yol_i &= \Col \xol_i ,  \\
       \uol_i &= - \Kol \sumjN \aji (\xol_i - \xol_j)  \label{eq_u-sync} 
\end{align}
\label{eq_sshetMAS}%
\end{subequations}
with state, input and output vector $\x_i \in \RZ^{n_i}$, $\u_i \in \RZ^{m_i}$ and $\y_i \in \RZ^{p}$. 
Since the network is heterogeneous, the system, input and output matrices: $\A_i \in \RZ^{n_i \times n_i}$, $\B_i \in \RZ^{n_i \times m_i}$ and 
$\C_i \in \RZ^{p \times n_i}$ can be different among the agents, with possibly different state and input dimensions, but the output dimension must be equal.
The $i$-th dynamic control strategy is given by (\ref{eq_agent_dyn}\text{c-f}). Herein, $\xol_i \in \RZ^{\nol}$ are states of an exosystem and $\uol_i \in \RZ^{\mol}$ its input. 
The exosystems determine a task which the network should accomplish and, hence, are homogeneous. It is defined by identical matrices $\Aol \in \RZ^{\nol \times \nol}$, $\Col \in \RZ^{p \times \nol}$ and
a $\Bol \in \RZ^{\nol \times \mol}$ such that $(\Aol,\Bol)$ is stabilizable.  
The matrices $\K_i \in \RZ^{m_i \times n_i}$, $\mPi_i \in \RZ^{n_i \times \nol}$, $\mGamma_i \in \RZ^{m_i \times \nol}$ and $\Kol \in \RZ^{\mol \times \nol}$ are to be designed.

The following assumptions are made for all agents:
\assumption $\Gg$ contains a directed spanning tree. \label{assump_spanning_tree}
\assumption $(\A_i,\B_i,\C_i)$ is stabilizable and detect- able. \label{assump_ctrb_obsv}
\assumption All eigenvalues $\lambda_j(\Aol)$ have equal algebra- ic and geometric multiplicities and satisfy $\Re{\lambda_j(\Aol)}=0$.\label{assump_eig_Aol}%

Let us define the multiset $\Ool = \{\Im{\lambda_j}\geq0 \text{ }\vert\text{ } \lambda_j \in\sigma(\Aol), \forall j\}$ which we will call the frequency spectrum.
\assumption  It holds $\frac{\omega_i}{\omega_j}\in \mathbb{Q}$ $\forall \omega_i, \omega_j\neq0 \in \Ool$.\label{assump_freq_ratio}

\assref{assump_spanning_tree} is a necessary condition to achieve synchronization with distributed synchronization protocols in time-invariant networks, and \assref{assump_ctrb_obsv} is standard in control theory.
\assref{assump_eig_Aol} guarantees bounded references given by the exosystem which is a standard assumption in context of infinite-time optimal tracking. Moreover, we regard periodic synchronization trajectories here. Since $\mathbb{Q}\subset\mathbb{R}$ is dense, however, \assref{assump_freq_ratio} is not a restriction effectively. Then a time period $T\in\mathbb{R}$ of the exosystem exists such that $\forall \omega_j\neq0 \in \Ool$ $\exists k_j\in \mathbb{N}$ such that $T=k_j \frac{2\pi}{\omega_j}$ holds. We remark that we do not need to calculate $T$ to apply the results of this paper.	

Furthermore, we assume without loss of generality that the system matrix of the exosystem is organized as follows
\begin{equation}
	\Aol = \text{diag}\left(\Aol_0,\Aol_1,\ldots,\Aol_{N_{\Ool}}\right) \label{eq_Aol}
\end{equation}
with the number of different circular frequencies $N_{\Ool}=\card{\text{supp}\left(\Ool\right)}-1$, where we assumed that $0\in\Ool$, and
\begin{equation*}\begin{split}
\Aol_0 &= \zero_{m_{\Ool}(0)\times m_{\Ool}(0)},\\
\Aol_j &= \omega_j\left(\I_{m_{\Ool}(\omega_j)}\otimes\ma 0 & 1\\ -1 & 0 \me\right)
\end{split}
\end{equation*}
for $j=1,\ldots,N_{\Ool}$, $\Ool\ni\omega_j\neq0$ and $\omega_i\neq\omega_j$ unless $i=j$. 
With respect to $\Aol_0$, we define the constant scalar state $\overline x_l$, $l\in\{1,\ldots,L\}$  with $L= m_{\Ool}(0)$. Furthermore, we define the state $\hat{\xol}_h \in\RZ^2$ of each harmonic second-order subsystem, $h\in\{1,\ldots,H\}$ with $H=\sum_{j=1}^{N_{\Ool}}m_{\Ool}(\omega_j)$.

 At this point, the block-diagonal structure of $\Aol$, which can always be obtained by similarity transformation, may seem technical. However, it will permit us to make use of some helpful results of \cite{Bernhard2016}.

\remark Without loss of generality, we disregarded heterogeneous disturbances in (\ref{eq_sshetMAS}a-b). Based on \cite{Bernhard2017b}, all presented results can be extended to disturbances given by local autonomous systems as long as \assref{assump_eig_Aol} and \ref{assump_freq_ratio} hold.

\remark In view of contribution C2), typical assumptions such as $\rank{\B_i}\geq\rank{\C_i}$ and that the eigenvalues of $\Aol$ and the invariant zeros of the agents' dynamics are disjoint, e.g. both is assumed in \cite{KimShimSeo2011}, are not yet required. These are usually needed to guarantee the feasibility of EXS. In contrast, the assumptions can be weakened for OSS in \secref{sec_opt_syn}. E.g., agents with less inputs than outputs are feasible, cf. the example in \secref{sec_sim}. \label{rem_asmp}

\subsection{Synchronization of Exogenous Systems}
\label{sec_sync}

Synchronization of exosystem states, i.e. $\lim_{t \to \infty} \big( \xol_i(t) - \xol_j(t) \big) = \zero$ for all $i, j \in \{1,\ldots,N\}$, with the distributed control law \eqref{eq_u-sync} occurs if and only if $\Aol - \lambda_i(\Lg) \Bol \Kol$ is Hurwitz for all $i \in \{2,\ldots,N\}$, e.g. \cite{MaZhang2010}. 
The following lemma is taken from \cite{Tuna2008} 
and given without proof.
\begin{lemma} \label{lem_homsync}
 Let $(\Aol, \Bol)$ be stabilizable and the symmetric matrix $\Pol$ be the unique positive definite solution of the \textit{algebraic Riccati equation}
 \begin{equation*}
  \Aol^\tp \Pol + \Pol \Aol - \Pol \Bol \Bol^\tp \Pol + \I_n = \zero.
 \end{equation*}
 The matrix $\Aol - \lambda_i(\Lg) \Bol \Kol$ is Hurwitz for all $i \in \{2, \ldots, N\}$, if the synchronization gain is chosen as $\Kol = \sigma^{-1} \Bol^\tp \Pol$ with $0 < \sigma \le \min_{i\ge2}\{\Re{\lambda_i(\Lg)}\}$.  \hfill\QED
\end{lemma}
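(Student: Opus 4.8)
The plan is to use the Riccati solution $\Pol$ itself as a common Lyapunov matrix for all $N-1$ closed-loop matrices $M_i := \Aol - \lambda_i(\Lg)\Bol\Kol$, $i\ge2$. First I would record that under \assref{assump_spanning_tree} the Laplacian $\Lg$ has a simple eigenvalue at zero, while every remaining eigenvalue lies strictly in the open right half-plane: by the definition of $\Lg$ each Gershgorin row-disc is centred at $l_{ii}=\sum_k a_{\Gg_{ki}}\ge0$ with radius $\sum_{j\ne i}a_{\Gg_{ji}}=l_{ii}$, hence is tangent to the imaginary axis at the origin and contained in the closed right half-plane, so $\Re{\lambda_i(\Lg)}\ge0$ with $\Re{\lambda_i(\Lg)}>0$ for $i\ge2$. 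This makes $0<\sigma\le\min_{i\ge2}\{\Re{\lambda_i(\Lg)}\}$ well defined and guarantees that the scaled eigenvalues $\mu_i := \lambda_i(\Lg)/\sigma$ satisfy $\Re{\mu_i}\ge1$.

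Substituting $\Kol=\sigma^{-1}\Bol^\tp\Pol$ rewrites the closed-loop matrix as $M_i = \Aol - \mu_i\Bol\Bol^\tp\Pol$. Because the $\lambda_i(\Lg)$ are generally complex, I would invoke the Hermitian Lyapunov criterion: a real matrix $M$ is Hurwitz iff there exists $\Pol\succ\zero$ with $M^*\Pol+\Pol M\prec\zero$. Using that $\Aol,\Bol,\Pol$ are real and $\Pol=\Pol^\tp$, a direct expansion collects the two cross terms into the real part of $\mu_i$,
\[
M_i^*\Pol+\Pol M_i = \big(\Aol^\tp\Pol+\Pol\Aol\big) - 2\Re{\mu_i}\,\Pol\Bol\Bol^\tp\Pol .
\]
The key step is to eliminate $\Aol^\tp\Pol+\Pol\Aol$ through the algebraic Riccati equation in the statement, i.e. $\Aol^\tp\Pol+\Pol\Aol=\Pol\Bol\Bol^\tp\Pol-\I_n$, which reduces the right-hand side to
\[
\big(1-2\Re{\mu_i}\big)\,\Pol\Bol\Bol^\tp\Pol - \I_n .
\]

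Since $\Re{\mu_i}\ge1$ gives $1-2\Re{\mu_i}\le-1<0$ and $\Pol\Bol\Bol^\tp\Pol\succeq\zero$, the first term is negative semidefinite, so $M_i^*\Pol+\Pol M_i\preceq-\I_n\prec\zero$; together with $\Pol\succ\zero$ the Lyapunov criterion certifies that each $M_i$ is Hurwitz, which is the claim. The existence and uniqueness of the positive definite $\Pol$ follows from stabilizability of $(\Aol,\Bol)$ and the positive definite weighting $\I_n\succ\zero$ in the Riccati equation (which makes the associated detectability condition trivial), as assumed in the statement. I expect the only real obstacle to be the complex bookkeeping — justifying the Hermitian Lyapunov inequality and verifying that the conjugate cross terms combine exactly into the factor $2\Re{\mu_i}$ without leaving spurious imaginary contributions; once the Riccati identity is inserted, the remainder is a one-line definiteness argument.
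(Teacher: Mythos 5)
Your proof is correct. Note that the paper itself gives no proof of this lemma --- it is stated as ``taken from \cite{Tuna2008} and given without proof'' --- and your argument is essentially the standard one from that line of work: use the stabilizing Riccati solution $\Pol$ as a common Lyapunov matrix, substitute the ARE identity $\Aol^\tp\Pol+\Pol\Aol=\Pol\Bol\Bol^\tp\Pol-\I_n$ into $M_i^*\Pol+\Pol M_i$, and exploit $\Re{\lambda_i(\Lg)/\sigma}\ge 1$ to conclude $M_i^*\Pol+\Pol M_i\preceq-\I_n\prec\zero$. The Gershgorin step correctly justifies $\Re{\lambda_i(\Lg)}>0$ for $i\ge2$ under the spanning-tree assumption, so $\sigma$ is well defined. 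One cosmetic slip: since $M_i$ is a complex matrix when $\lambda_i(\Lg)\in\CZ\setminus\RZ$, you should state the Lyapunov criterion for \emph{complex} matrices with Hermitian $\Pol\succ\zero$ (of which your real symmetric $\Pol$ is a special case), not ``a real matrix $M$''; your actual computation with $M_i^*$ already handles this correctly.
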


Following \cite{Tuna2008}, all outputs of the agents' exosystem converge to the synchronization trajectory $\yol(t) = \Col \xol(t)$ with%
\begin{subequations}\begin{align} 
	\xol(t) &= \text{e}^{\text{\small$\Aol$} t} \xol(0) , \label{eq_xol_t} \\
	\xol(0) &\in \text{conv}\big( \{\xol_1(0), \ldots, \xol_N(0) \}\big) . \label{eq_xol_0}  
	\end{align} \label{eq_syn_traj}
\end{subequations}
Since (\ref{eq_sshetMAS}d-e) defines a mutual objective, it is reasonable to assume that $\xol_i(0) \in \set{\overline X}$, $\forall i$ where $\set{\overline X}$ is a bounded subset of the euclidean space: $\set{\overline X}\subset\RZ^{\nol}$.
Due to \eqref{eq_xol_0}, it results $\xol(0) \in \set{\overline X}$. With respect to the structure of \eqref{eq_Aol}, we suppose that $\set{\overline X}$ accounts for the maximal step-height $a_l^{\text{max}}$ of each scalar constant subsystem $\overline x_l$, $l\in\{1,\ldots,L\}$ and for the maximal amplitude $\widehat A_h^{\text{max}}$ of each harmonic second-order subsystem $\hat{\xol}_h$, $h\in\{1,\ldots,H\}$, cf. \secref{sec_problem}. This means that any $\xol(0)$ with $\vert\overline x_l(0)\vert\leq a_l^{\text{max}}$, $\forall l$ and $\Vert\hat{\xol}_h(0)\Vert_2\leq\widehat A_h^{\text{max}}$, $\forall h$ satisfies $\xol(0)\in\set{\overline X}$.

Hence, we may write $\set{\overline X}=\left(\cap_{l=1}^L \set{\overline X}_l\right)\cap\left(\cap_{h=1}^H \set{\overline X}_h\right)$ with%
\begin{equation}\begin{split}
	\set{\overline X}_l &= \bigg\{\xol \in \mathbb{R}^{\nol} \,\Big\vert\, \tfrac{1}{\left(a_l^{\text{max}}\right)^2}\xol^\tp\M_l\xol\leq 1\bigg\}, \notag \\
	\set{\overline X}_h &= \bigg\{\xol \in \mathbb{R}^{\nol} \,\Big\vert\, \tfrac{1}{\left(\widehat A_h^{\text{max}}\right)^2}\xol^\tp\N_h\xol\leq 1\bigg\}\notag
\end{split}\end{equation}
where
\begin{subequations}\label{eq_M_N}\begin{align}
	\M_l &= \text{diag}\left(\vec{e}_l\vec{e}_l^\tp,\zero_{2H\times 2H}\right), \notag \\
	\N_h &= \text{diag}\left(\zero_{L\times L},(\vec{e}_h\vec{e}_h^\tp\otimes\I_2)\right) \notag
	\end{align}\end{subequations}
are diagonal matrices with $\e_l\in\mathbb{R}^L$ and $\e_h\in\mathbb{R}^H$. 
It is important to note that $\set{\overline X}$ is an \textit{invariant set} implying that the synchronization trajectory \eqref{eq_xol_t} satisfies $\xol(t)\in\set{\overline X}$ $\forall t\in[0,\infty)$ if $\xol(0) \in \set{\overline X}$. Furthermore, let us define
\begin{equation}
	\P=\text{diag}\left(a_1^{\text{max}},\ldots,a_L^{\text{max}},\widehat A_1^{\text{max}}\I_2,\ldots,\widehat A_H^{\text{max}}\I_2\right) \label{eq_P}
\end{equation}
which will be used for a normalization later on.

The preceding definitions are used in the optimization problem formulated in \secref{sec_opt_err_syn}. We remark that the convex set $\set{\overline X}$ may be given in a different form than above, e.g. by a convex polytope. However, then it may be necessary to approximate $\set{\overline X}$ by an invariant set based on quadratic forms as in \cite[Sec. 2.6.3]{Boyd1994} in order to apply the results in \secref{sec_opt_err_syn}.

\subsection{Local Transition \& Definition of OSS}

Once the homogeneous part of the agent-dynamics, i.e. the exosystems, are synchronized, the problem of synchronizing $\y_i$ $\forall i \in \{1,\ldots,N\}$ has to be solved locally. Hence, the task of each agent is that its output $\y_i$ tracks the output $\yol_i$ of its exosystem stationarily to some specified degree.

For this reason, we introduce the pair $(\mPi_i,\mGamma_i)$ satisfying
\begin{equation} \label{eq_stat_sys}
\mPi_i \Aol = \A_i \mPi_i +\B_i \mGamma_i.
\end{equation}
If $\K_i$ is chosen such that $\A_i - \B_i \K_i$ is Hurwitz, the local transition $\lim_{t \to \infty}\big(\x_i(t)-\mPi_i \xol_i(t) \big)=\zero$ will be guaranteed. Omitting details, this results by standard means \cite{Trentel2001} since $\lim_{t \to \infty} \uol_i(t) = \zero$ implies that \eqref{eq_hom_sys} is asymptotically autonomous. In addition, $\lim_{t \to \infty} \big( \xol(t) - \xol_i(t) \big) = \zero$ holds; hence, it follows $\lim_{t \to \infty}\big(\x_i(t)-\mPi_i \xol(t) \big)=\zero$. As a consequence, for analyzing each agent's stationary behavior based on \eqref{eq_het_dyn} and \eqref{eq_u-track} it suffices to analyze its stationary response $\mPi_i\xol(t)$ due to excitation by $\mGamma_i\xol(t)$ with \eqref{eq_syn_traj}.

The main goal of this contribution is to guarantee an optimal stationary synchronization (OSS) by a distributed control. Since $J(\yt_{i},\u_{i})\rightarrow\infty$, $t_\text{f}\rightarrow\infty$ for any $\u_{i}(\cdot)$ in general, the classical definition of optimality does not apply here \cite{Anderson2007}. For the sake of compactness, we avoid to introduce technical concepts of optimality for infinite-time LQT. However, it can be drawn from \cite{Bernhard2017b} that a solution satisfying the following definition of OSS is a so-called \textit{strongly optimal} solution of an equivalent LQT problem.

\begin{defn} With respect to the cost \eqref{eq_cost} and any $\xol(0) \in \RZ^{\nol}$, the \textbf{stationary synchronization} of agent $i$ for the local control $\u_{i}(\cdot)$ given by \eqref{eq_u-track} is
	\begin{enumerate}
		\item \textbf{exact} if $(\mPi_i,\mGamma_i)$ such that $\lim_{t \to \infty}\yt_{i}(t)=\zero$. (EXS)\label{def_exact_syn}
		\item \textbf{optimal} if $(\mPi_i^*,\mGamma_i^*)$ such that for any other $\widehat\u_{i}(\cdot)$ 
		\begin{equation*}
			\lim_{t_\text{f} \to \infty}\left(J_{t_\text{f}}\big(\widehat\yt_{i},\widehat\u_{i}\big)-J_{t_\text{f}}\big(\yt_{i}^*,\u^*_{i}\big)\right)=+\infty
		\end{equation*}
		holds if $\widehat\x_{i}(t)-\mPi_i^*\xol(t)\not\rightarrow\zero$ as $t\rightarrow\infty$. (OSS)\label{def_opt_syn}
		\item \textbf{error-bounded optimal} if $(\mPi_i^*,\mGamma_i^*)$ satisfies \ref{def_opt_syn}) and, in addition, for any $\xol(0)\in\set{\overline X}$ it holds
		\begin{equation}\label{eq_err_bound}
			\vert\e_j^\tp \left(\C\mPi_i^*-\Col\right) \xol(t)\vert \leq \epsilon_{ij}
		\end{equation}
		with tolerated error $\epsilon_{ij}>0$, $\forall j \in \{1,\ldots,p\}$ and $\forall t \in [0,\infty)$. (EBOSS)  \label{def_opt_err_syn}		
	\end{enumerate}	
	\label{def_syn}
\end{defn}
Notice that we compare $\u_i^*(\cdot)$ to any arbitrary control $\widehat\u_i(\cdot)$. Hence, we do not impose any restrictions on the class of optimal solutions in \defrefv{def_syn}{def_opt_syn}. For exogenous references such as \eqref{eq_syn_traj}, \cite{Bernhard2017b} proves that the solution of an infinite-time LQT-problem is indeed a time-invariant control such as \eqref{eq_u-track}. This leads to an optimal stationary trajectory $\mPi^*\xol(t)$ induced by a static pre-filter $\mGamma^*\xol(t)$, i.e. the pair $(\mPi^*,\mGamma^*)$. Clearly, any other choice $(\mPi_i,\mGamma_i)$ besides $(\mPi_i^*,\mGamma_i^*)$ will require an infinite amount of additional cost based on \defrefv{def_syn}{def_opt_syn}.

\remark In \cite{Kreindler1969}, it is criticized that in infinite-time LQT there is ``no control over the resultant steady-state error''.
In contrast to \cite{Kreindler1969}, however, we will be able to explicitly consider given strict error-bounds as in \defref{def_syn}.\ref{def_opt_err_syn}) in the design process by extending the results in \cite{Bernhard2017b}.

\section{LOCAL OPTIMAL STATIONARY SYNCHRONIZATION}
\label{sec_main_opt_syn}

In this section, our contributions C1) and C3) are presented. We show how each agent achieves OSS and EBOSS by a local control $\u_i$, cf. \defref{def_opt_syn}. This allows the agent to individually balance its synchronization error in relation to its consumed input-energy. Or, the agent is enabled to synchronize as best as it can when EXS is infeasible.

To determine an optimal pair $(\mPi_i^*,\mGamma_i^*)$, OSS is addressed in \secref{sec_opt_syn} which provides useful extensions of results in \cite{Bernhard2017b}. These will help us to approach the EBOSS in \secref{sec_opt_err_syn} by means of a meaningful parametric optimization problem with optimization variable $\Q_i$.

The results presented here account for infinite-time LQT-problems in general. Hence, we drop the index $i$ in the sequel to emphasize the modularity of our approach.

\subsection{Optimal Stationary Synchronization (OSS)}
\label{sec_opt_syn}

In order to achieve optimal tracking with respect to cost \eqref{eq_cost}, we give an alternative set of equations for determining the pair $(\mPi^*,\mGamma^*)$ in comparison to \cite{Bernhard2017b}. It is given by
\begin{theorem} \label{thm_opt_syn}
	Suppose \assref{assump_ctrb_obsv} and \ref{assump_eig_Aol} are satisfied. Then, \textbf{optimal stationary synchronization} (OSS) based on \defref{def_syn}.\ref{def_opt_syn}) is achieved for any $\xol(0) \in \R^{\nol}$ if and only if $(\mPi^*,\mGamma^*)$ is given by the unique solution of the equations
	\begin{equation} \label{eq_sylv_ham}
		\ma \mPi \\ \mPi_\lambda \me \Aol = \underbrace{\ma \A & -\B\R^{-1}\B^\tp \\ -\C^\tp\Q\C & -\A^\tp  \me}_{\text{\small=\mTheta}} \ma \mPi \\ \mPi_\lambda \me + \ma \zero \\ \C^\tp \Q \Col \me\vspace{-.2cm}
	\end{equation}
	with $\mPi_\lambda \in \RZ^{n\times\nol}$ and
	\begin{equation}\label{eq_opt_gamma}
		\mGamma=-\R^{-1}\B^\tp\mPi_\lambda.
	\end{equation}
\end{theorem}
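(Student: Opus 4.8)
The plan is to interpret $\mTheta$ as the Hamiltonian matrix of the underlying infinite-horizon LQT problem and to read \eqref{eq_sylv_ham} as the stationarity condition of its Euler--Lagrange system. I would first form the Hamiltonian $\mathcal H=\tfrac12\big(\yt^\tp\Q\yt+\u^\tp\R\u\big)+\vec{\lambda}^\tp(\A\x+\B\u)$ with $\yt=\C\x-\Col\xol$ and costate $\vec{\lambda}$. The condition $\partial\mathcal H/\partial\u=\zero$ gives $\u^*=-\R^{-1}\B^\tp\vec{\lambda}$, and $\dot{\vec{\lambda}}=-\partial\mathcal H/\partial\x$ yields the canonical system
\begin{equation*}
\ma \dot{\x} \\ \dot{\vec{\lambda}} \me=\mTheta\ma \x \\ \vec{\lambda} \me+\ma \zero \\ \C^\tp\Q\Col \me\xol ,
\end{equation*}
whose drift matrix is exactly the $\mTheta$ of the theorem.

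Next I would insert the stationary ansatz $\x=\mPi\xol$, $\vec{\lambda}=\mPi_\lambda\xol$ and use $\dot{\xol}=\Aol\xol$ from \eqref{eq_syn_traj}. Because this identity is required to hold for every $\xol(0)\in\RZ^{\nol}$, the common factor $\xol$ may be dropped and the matrix equation \eqref{eq_sylv_ham} follows at once. Identifying the steady-state input $\u^*=\mGamma\xol$ (the feedback term $-\K(\x-\mPi\xol)$ vanishing as $\x\to\mPi\xol$) with $\u^*=-\R^{-1}\B^\tp\mPi_\lambda\xol$ yields \eqref{eq_opt_gamma}. As a consistency check, the top block row of \eqref{eq_sylv_ham} reads $\mPi\Aol=\A\mPi-\B\R^{-1}\B^\tp\mPi_\lambda=\A\mPi+\B\mGamma$, which recovers the regulator constraint \eqref{eq_stat_sys}; hence the resulting $(\mPi^*,\mGamma^*)$ is admissible for the control \eqref{eq_u-track}.

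For uniqueness I would observe that \eqref{eq_sylv_ham} is a Sylvester equation for the stacked matrix $\mXi$ collecting $\mPi$ and $\mPi_\lambda$, which has a unique solution if and only if $\sigma(\mTheta)\cap\sigma(\Aol)=\emptyset$. Under \assref{assump_ctrb_obsv} together with $\Q\succ\zero$, $\R\succ\zero$, the Hamiltonian $\mTheta$ has no eigenvalue on the imaginary axis, whereas \assref{assump_eig_Aol} places $\sigma(\Aol)$ entirely on it. The two spectra are therefore disjoint, so $(\mPi^*,\mPi_\lambda^*)$ and thus $(\mPi^*,\mGamma^*)$ are unique.

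The main obstacle is to upgrade this Pontryagin stationarity from a necessary condition to the strong-optimality statement of \defrefv{def_syn}{def_opt_syn}, which simultaneously closes the ``only if'' direction. Since the cost diverges for every admissible input on the infinite horizon, I must prove that any control whose state does not converge to $\mPi^*\xol$ accumulates an \emph{infinite} excess cost. I would do so by reducing to the characterization in \cite{Bernhard2017b}: decompose the costate as $\vec{\lambda}=\mat{S}\x-\vec{v}$, where $\mat{S}$ is the stabilizing solution of $\mat{S}\A+\A^\tp\mat{S}-\mat{S}\B\R^{-1}\B^\tp\mat{S}+\C^\tp\Q\C=\zero$ and $\vec{v}=\mathcal V\xol$ solves a Sylvester equation driven by $\Col$. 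The substitution $\mPi_\lambda=\mat{S}\mPi-\mathcal V$ turns \eqref{eq_sylv_ham} into that reference pair of equations. Because $\A-\B\R^{-1}\B^\tp\mat{S}$ is Hurwitz, $\mPi^*\xol$ is precisely the trajectory on the stable manifold of the Hamiltonian flow; any deviation excites the anti-stable modes and forces the cost difference in \defrefv{def_syn}{def_opt_syn} to $+\infty$. Carrying out this divergence estimate carefully, rather than merely invoking Pontryagin's necessary conditions, is the delicate step and the reason the result is stated as an equivalence.
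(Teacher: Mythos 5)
Your proposal is correct and its backbone coincides with the paper's: Pontryagin stationarity gives the Hamiltonian system with drift matrix $\mTheta$, the stationary ansatz $\x=\mPi\xol$, $\vec{\lambda}=\mPi_\lambda\xol$ turns it into the Sylvester equation \eqref{eq_sylv_ham} together with \eqref{eq_opt_gamma}, and uniqueness follows because $\mTheta$ has no imaginary-axis eigenvalues under \assref{assump_ctrb_obsv} while \assref{assump_eig_Aol} places $\sigma(\Aol)$ entirely on that axis, so $\sigma(\mTheta)\cap\sigma(\Aol)=\emptyset$. The one place where you genuinely depart from the paper is the step you yourself flag as delicate: upgrading stationarity to the strong-optimality statement of \defrefv{def_syn}{def_opt_syn}. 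You propose to do this constructively, via the sweep substitution $\vec{\lambda}=\mat{S}\x-\vec{v}$, the identification $\mPi_\lambda=\mat{S}\mPi-\mathcal{V}$, and an explicit divergence estimate on the excess cost. The paper sidesteps that computation entirely with a uniqueness argument: it imports from \cite{Bernhard2017b} (Thm.~10 and Corol.~11) that a unique pair $(\mPi^*,\mGamma^*)$ satisfying \defrefv{def_syn}{def_opt_syn} \emph{exists}; since that pair must satisfy the necessary conditions \eqref{eq_sylv_ham}--\eqref{eq_opt_gamma}, and those equations admit exactly one solution, the two objects coincide and both directions of the equivalence follow at once. Your route is more self-contained (it essentially re-proves the cited result in the new coordinates, which is exactly the ARE-based ``sweep method'' the paper says it is avoiding), at the price of carrying out the completion-of-squares/divergence estimate carefully -- in particular you must check that the boundary term in the completed square remains bounded, which it does because $\mPi^*\xol(t)$ and $\vec{v}=\mathcal{V}\xol(t)$ are bounded under \assref{assump_eig_Aol}, so the infinite excess cost is carried entirely by the integral of the squared deviation from the optimal feedback. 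Both arguments are valid; the paper's is shorter but leans wholly on the external theorem, yours would make the theorem independent of it.
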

\begin{proof}
	For the present assumptions, it was proven in \cite[Thm. 10 and Corol. 11]{Bernhard2017b} that a unique static pre-filter $\mGamma^*\xol(t)$ always exists which leads to a unique stationary solution $\mPi^*\xol(t)$ satisfying \defref{def_syn}.\ref{def_opt_syn}). However, such a pair $(\mPi^*,\mGamma^*)$ must satisfy the necessary optimality conditions for infinite horizons $t_\text{f} \to \infty$ \cite{Halkin1974}. 
	Instead of the sweep-method-based approach in \cite{Bernhard2017b} (which involves an \textit{algebraic Riccati equation} -- ARE), these conditions can also be expressed by \eqref{eq_sylv_ham} and \eqref{eq_opt_gamma} in our case. Here, \eqref{eq_sylv_ham} defines the stationary solution of the \textit{Hamiltonian system}. It is well known, cf. \cite{Anderson2007}, that the corresponding system matrix $\mTheta$ does not have any eigenvalues on the imaginary axis if \assref{assump_ctrb_obsv} holds. Hence, under \assref{assump_eig_Aol}, \textit{Sylvester equation} \eqref{eq_sylv_ham} has a unique solution since $\sigma(\mTheta)\cap\sigma(\Aol)=\emptyset$, e.g. see \cite{Trentel2001}, and necessity as well as sufficiency follow by uniqueness.
\end{proof}
It will prove handy in the next section that we omitted a nonlinear ARE here. While we have already found a solution covering \defref{def_syn}.\ref{def_opt_syn}) the following optimization problem (OP) will be helpful to determine a meaningful objective for an OP accounting for \defref{def_syn}.\ref{def_opt_err_syn}). In this context, we exploit that an optimal stationary solution $\mPi^*\xol(t)$ induced by $\mGamma^*\xol(t)$ is $T$-periodic. Hence, instead of regarding the cost over $[0,\infty)$, it suffices to consider one period, i.e. $[t_0,t_0+T]$.%
\begin{lemma}\label{lem_expl_lqt}
	Under \assref{assump_ctrb_obsv}, \ref{assump_eig_Aol} and \ref{assump_freq_ratio}, the pair $(\mPi^*,\mGamma^*)$ obtained from \thmref{thm_opt_syn} is equivalently given by
	\begin{optprob} \label{opt_min_trace}
	\begin{equation*}\begin{split} 
		\underset{\text{\small$\mPi,\mGamma$}}{\text{argmin }}&\hspace{.075in}\trace{\left(\C\mPi - \Col\right)^\tp\Q\left(\C\mPi - \Col\right)+ \mGamma^\tp \R \mGamma} \\
		\text{subject to: }&\hspace{.075in}\mPi\Aol=\A\mPi+\B\mGamma.
	\end{split}\end{equation*}
	\end{optprob}
	\vspace{.05in}
\end{lemma}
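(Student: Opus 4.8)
The plan is to prove the equivalence in two complementary ways: first by giving the trace objective of \optref{opt_min_trace} a physical meaning as the cost \eqref{eq_cost} accumulated over a single period, which motivates the formulation and explains the role of \assref{assump_freq_ratio}; and second, by matching the first-order optimality conditions of \optref{opt_min_trace} with the equations \eqref{eq_sylv_ham}--\eqref{eq_opt_gamma} of \thmref{thm_opt_syn}, which supplies the rigorous identification. A preliminary observation ties the two together: the constraint $\mPi\Aol=\A\mPi+\B\mGamma$ of \optref{opt_min_trace} is precisely the stationarity relation \eqref{eq_stat_sys}, so the feasible set of \optref{opt_min_trace} is exactly the set of admissible stationary pairs, and the top block row of \eqref{eq_sylv_ham} reproduces this same constraint once \eqref{eq_opt_gamma} is inserted.

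For the interpretation, I would substitute the stationary behaviour into \eqref{eq_cost}. At stationarity $\x\to\mPi\xol$ and $\u\to\mGamma\xol$, so the integrand becomes $\xol^\tp M\xol$ with $M=(\C\mPi-\Col)^\tp\Q(\C\mPi-\Col)+\mGamma^\tp\R\mGamma$. Using \eqref{eq_syn_traj}, $\xol(t)=\mathrm{e}^{\Aol t}\xol(0)$, the one-period cost is $\tfrac12\xol(0)^\tp\big(\int_0^T\mathrm{e}^{\Aol^\tp t}M\,\mathrm{e}^{\Aol t}\,\mathrm{d}t\big)\xol(0)$, where \assref{assump_freq_ratio} guarantees the common period $T$. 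The key structural fact is that in the normal form \eqref{eq_Aol} the matrix $\Aol$ is skew-symmetric, so $\mathrm{e}^{\Aol t}$ is orthogonal and $\trace{\mathrm{e}^{\Aol^\tp t}M\,\mathrm{e}^{\Aol t}}=\trace{M}$ for every $t$. Summing the one-period cost over the basis initial conditions $\xol(0)=\e_k$ therefore yields $\tfrac{T}{2}\trace{M}$, i.e. the objective of \optref{opt_min_trace} up to the positive constant $T/2$. Since the pair of \thmref{thm_opt_syn} is optimal for every $\xol(0)$ in the sense of \defrefv{def_syn}{def_opt_syn}, it minimizes the one-period cost for each $\e_k$ simultaneously, and hence their sum.

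To turn this into a clean argument I would rely on the first-order conditions. \optref{opt_min_trace} is a convex program (a convex quadratic objective, as $\Q\succ0$ and $\R\succ0$, over an affine feasible set), so its Karush--Kuhn--Tucker conditions are necessary and sufficient for global optimality. Introducing a matrix multiplier $\Lambda\in\RZ^{n\times\nol}$ for the constraint and setting the derivatives of the Lagrangian to zero gives $\R\mGamma=\tfrac12\B^\tp\Lambda$ together with $\C^\tp\Q(\C\mPi-\Col)+\tfrac12(\Lambda\Aol^\tp-\A^\tp\Lambda)=\zero$. Identifying $\mPi_\lambda=-\tfrac12\Lambda$ turns the first relation into \eqref{eq_opt_gamma}, and using $\Aol^\tp=-\Aol$ the second becomes exactly the bottom block row of \eqref{eq_sylv_ham}. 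Thus the KKT system coincides with \eqref{eq_sylv_ham}--\eqref{eq_opt_gamma}, which \thmref{thm_opt_syn} shows has a unique solution; this unique KKT point is the global minimizer and therefore equals $(\mPi^*,\mGamma^*)$.

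The main obstacle I expect is bookkeeping the \emph{direction} of the Sylvester equation: the Lagrangian naturally produces a term $\Lambda\Aol^\tp$, whereas \eqref{eq_sylv_ham} carries $\mPi_\lambda\Aol$, and the two agree only because the normal form \eqref{eq_Aol} makes $\Aol$ skew-symmetric. Establishing this skew-symmetry from \assref{assump_eig_Aol} and \assref{assump_freq_ratio} is therefore the load-bearing step in both routes, since it is equally what makes $\mathrm{e}^{\Aol t}$ orthogonal in the interpretation above. A secondary point is uniqueness: the objective is convex but not strictly convex in $\mPi$ when $\C$ lacks full column rank, so uniqueness of the minimizer cannot be read off the objective alone and must be imported from the uniqueness of the Sylvester solution asserted in \thmref{thm_opt_syn}.
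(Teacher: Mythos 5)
Your proposal is correct, and the argument that actually closes it is genuinely different from the paper's. The paper stays entirely on the ``interpretation'' side: it rewrites the one-period stationary cost as a quadratic form $T\xol(t_0)^\tp\widetilde{\G}^\tp\widetilde{\G}\xol(t_0)$ by invoking a lemma from an earlier reference, notes $\trace{\widetilde{\G}^\tp\widetilde{\G}}=\trace{\G^\tp\G}$ via orthogonality of the blocks $\E_j$, and then passes from ``optimal for every $\xol(t_0)$'' to ``unique minimizer of the trace'' by a semidefiniteness contradiction ($\Z\succeq\zero$ with $\trace{\Z}=0$ would force $\Z=\zero$, contradicting the existence of a direction with strict inequality). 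Your first route is a self-contained variant of that step (orthogonality of $\text{e}^{\Aol t}$ in place of the cited lemma), but, as you concede, summing over basis initial conditions only shows the pair from \thmref{thm_opt_syn} is \emph{a} minimizer of the trace, not \emph{the} argmin. What completes your proof is the second route, which the paper does not use at all: the KKT system of the convex program \optref{opt_min_trace} coincides, after the substitution $\mPi_\lambda=-\tfrac{1}{2}\Lambda$ and the skew-symmetry $\Aol^\tp=-\Aol$ supplied by the normal form \eqref{eq_Aol}, exactly with \eqref{eq_sylv_ham}--\eqref{eq_opt_gamma} (I checked the multiplier calculus: $\partial L/\partial\mGamma$ yields \eqref{eq_opt_gamma}, $\partial L/\partial\mPi$ yields the bottom block row, and the top block row is the constraint itself); since that system has a unique solution by \thmref{thm_opt_syn}, and KKT is necessary and sufficient here (affine constraint, convex objective, with existence of a minimizer following because the unique KKT point must be globally optimal), the argmin is unique and equals $(\mPi^*,\mGamma^*)$. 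Your route buys a cleaner logical structure -- no external lemma and no separate uniqueness contradiction -- while the paper's route buys the explicit quantitative identity between the trace objective and the per-initial-condition period cost, which it later reuses to motivate the objective of \optref{opt_BMI}. One shared caveat: like the paper, you take as evident that optimality in the sense of \defrefv{def_syn}{def_opt_syn} implies minimality of the stationary one-period cost; that step is inherited from the cited infinite-horizon LQT results in both treatments.
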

\begin{proof}
	We regard the general case: there are zero and non-zero elements in $\Ool$.
	As indicated by the equality constraint, we are only interested in the stationary behavior. Hence, we examine the stationary cost with respect to \eqref{eq_cost} over one period $T$, i.e.
	\begin{equation}\label{eq_cost_T}
		\int_{t_0}^{t_0+T}\hspace{-.4cm}\xol(t)^\tp \underbrace{\left(\left(\C\mPi - \Col\right)^\tp\Q\left(\C\mPi - \Col\right)+ \mGamma^\tp \R \mGamma\right)}_{\text{\small$=\G^\tp\G$}} \xol(t) \text{ d}t
	\end{equation}
	where $t_0\geq0$ is arbitrary. Our aim is formulating a parametric OP such as \optref{opt_min_trace}. Thus, we look for a matrix $\widetilde{\G }$ such that $T \xol^\tp(t_0)\widetilde{\G }^\tp\widetilde{\G} \xol(t_0)$ equals \eqref{eq_cost_T}. With the exosystem being in the special form of \eqref{eq_Aol}, however, we can make use of \cite[Lemma 2]{Bernhard2016} which exploits the orthogonality of sinusoids. Then, \eqref{eq_cost_T} equals
	\begin{multline*}
		T \xol^\tp(t_0)\widetilde{\G }^\tp\widetilde{\G}\xol(t_0)=T\xol_0(t_0)^\tp\G_0^\tp\G_0\xol_0(t_0)\\+T\sum_{j=1}^{N_{\Ool}}\xol_j(t_0)^\tp\frac{1}{2}\left(\G_j^\tp\G_j+\E_j^\tp\G_j^\tp\G_j\E_j \right)\xol_j(t_0)
	\end{multline*}
	where $\G_j$ are the columns of $\G$ that correspond to the states $\xol_j$ associated with the $j$-th block on the diagonal of \eqref{eq_Aol} and
	\begin{equation*}
		\E_j = \I_{m_{\Ool}(\omega_j)}\otimes\ma 0 & 1\\ -1 & 0 \me.
	\end{equation*}
	Since $\E_j$ is orthogonal, we also find
	\begin{equation}\label{eq_trace_equal}
		\trace{\widetilde{\G}^\tp\widetilde{\G}}=\sum_{j=0}^{N_{\Ool}}\trace{\G_j^\tp\G_j}=\trace{\G^\tp\G}
	\end{equation}
	based on the invariance of the $\text{trace}$-operation towards similarity transformation.
		
	Based on the periodicity of any stationary solution $\mPi\xol(t)$, it is evident that an optimal pair $(\mPi^*,\mGamma^*)$ with respect to \defref{def_syn}.\ref{def_opt_syn}) must lead to a minimal cost over one period $T$. Hence, it must hold
	\begin{equation}\label{eq_LMI_cost_T}
	\hspace{-.085in}T \xol^\tp(t_0)\widetilde{\G }^\tp\widetilde{\G}\bigg\vert_{\text{\small$\mPi,\mGamma$}}\xol(t_0)\geq T \xol^\tp(t_0)\widetilde{\G }^\tp\widetilde{\G}\bigg\vert_{\text{\small$\mPi^*,\mGamma^*$}}\xol(t_0)
	\end{equation}
	for all $\xol(t_0)$ and any other $(\mPi,\mGamma)$. In the sequel, we exploit the knowledge that a unique $(\mPi^*,\mGamma^*)$ satisfying \eqref{eq_LMI_cost_T} is given by \thmref{thm_opt_syn} and show that it indeed uniquely solves \optref{opt_min_trace}.
		As a first consequence, for any other $(\mPi,\mGamma)$ we can always find an $\xol^*(t_0)$ for which the strict inequality holds in \eqref{eq_LMI_cost_T}. 
	
	Let us introduce $\Z\coloneqq\widetilde{\G }^\tp\widetilde{\G}\big\vert_{\text{\small$\mPi,\mGamma$}}-\widetilde{\G }^\tp\widetilde{\G}\big\vert_{\text{\small$\mPi^*,\mGamma^*$}}$.	
	Since $\Z\succeq \zero$, it is clear that $\trace{\Z} \geq 0$. Now suppose $\trace{\Z} = 0$ which would imply $\Z=\zero$.
	But this is a contradiction with respect to the existence of $\xol^*(t_0)$. As a result we have $\text{trace}(\Z) > 0$ and, consequently, $\trace{\G ^\tp\G\big\vert_{\text{\small$\mPi,\mGamma$}}}>\trace{\G^\tp\G\big\vert_{\text{\small$\mPi^*,\mGamma^*$}}}$ due to \eqref{eq_trace_equal}. Thus, the proposition follows. 
\end{proof}
It is well known, e.g. see \cite{Willems2004}, that the optimal infinite-time LQT-control for constant references can be obtained from an off-line OP. We have shown that this is even possible in the case of time-varying, bounded exogenous references \textit{without} a-priori knowledge of the initial value $\xol(0)$.

In a different context, an OP related to \optref{opt_min_trace} was proposed in \cite{Krener1992}. The author remarked that it is sensitive to the chosen coordinates of the exosystem, i.e. one can observe that it leads to suboptimal solutions. However, if the exosystem is transformed into the special form of \eqref{eq_Aol}, we have just proven that \optref{opt_min_trace} indeed gives the unique optimal solution.

\remark At this point, one might be tempted to solve \optref{opt_min_trace} with additional constraints \eqref{eq_err_bound}. However, this will lead to a suboptimal solution which does not account for \defref{def_syn}.\ref{def_opt_syn}). Instead, we will present a proper approach.
\remark In case \assref{assump_eig_Aol} is violated, i.e. the references are unbounded, $(\mPi,\mGamma)$ given by \thmref{thm_opt_syn} can still be applied. It constitutes an approximation of the finite-time optimal LQT-control for $t_\text{f}<\infty$ under certain conditions, for details we refer to \cite{Bernhard2017b}.

\subsection{Error-Bounded Optimal Stationary Synchro. (EBOSS)}
\label{sec_opt_err_syn}

Our goal is to find a pair $(\mPi^*,\mGamma^*)$ which is optimal with respect to a cost such as \eqref{eq_cost} and satisfies the output error-bounds \eqref{eq_err_bound}, i.e. we seek an error-bounded optimal solution. In this regard, we introduce an OP which resembles an inverse problem in parts. More precisely, we look for a suitable tracking-error weight $\Q$ such that the desired bounds are satisfied by the optimal control corresponding to \defref{def_syn}.\ref{def_opt_err_syn}). At the same time, the feasible optimal control should be efficient in terms of the input-energy for a given $\R$. 
In this light, we will analyze the following
\begin{optprob}\label{opt_BMI}
	\begin{subequations}\label{eq_opt_BMI}\begin{align}
		&\min_{\text{\small$\mPi$,$\mGamma$,$\mPi_\lambda$,$\Q\succ\zero$}} \hspace{.15in}\trace{\mGamma^\tp\R\mGamma\P^2} \label{eq_opt_BMI_obj} \\ 
		\text{subject}& \text{ to:} \notag\\
		&\eqref{eq_sylv_ham}\text{, }\eqref{eq_opt_gamma}\text{ and } \forall j\in\{1,\ldots,p\}:\notag \\
		&\RZ^{L+H}\ni\tauvec_j \geq \zero\text{,} \label{eq_tau_pos}\\
		& 1-\sum_{i=1}^{L+H}\e_i^\tp\tauvec_j \geq 0\text{,} \label{eq_tau_eps}\\
		&\ma \X_j & \left(\C\mPi-\Col\right)^\tp\e_j \\ \e_j^\tp(\C\mPi-\Col) & \epsilon_j^2  \me \succeq \zero \label{eq_bound}\\
		&\text{where }\X_j=\sum_{l=1}^{L}\tfrac{\text{$\e_l^\tp\tauvec_j$}}{\left(a_l^{\text{max}}\right)^2}\M_l+\sum_{h=1}^{H}\tfrac{\text{$\e_{L+h}^\tp\tauvec_j$}}{\left(\widehat A_h^{\text{max}}\right)^2}\N_h \notag
		\end{align}\end{subequations}
\end{optprob}
\vspace{0.1in}
with $\M_l$, $\N_h$ and $\P$ as defined in \secref{sec_sync} and element-wise comparison by $\geq$.

To guarantee solvability of \optref{opt_BMI} we impose%
\assumption  There exists a pair $(\mPi,\mGamma)$ solving the regulator equations, i.e. \eqref{eq_stat_sys} and $\C\mPi-\Col=\zero$.\label{assump_EXS}

According to \cite{Trentel2001}, \assref{assump_EXS} is satisfied if and only if an EXS solution exists. Thus, arbitrarily small given $\epsilon_j>0$ can be satisfied by
$(\mPi^*,\mGamma^*)$ obtained from \thmref{thm_opt_syn} if $\Q$ is suitably chosen, i.e. a sufficiently large weighting of the synchronization error leads to a sufficiently close approximation of the EXS solution. Hence, \optref{opt_BMI} must have a solution.  Now, we are able to achieve the result:
\begin{theorem} \label{thm_opt_err_syn}
	Suppose \assref{assump_ctrb_obsv}-\ref{assump_EXS}
	hold and a set $\set{\overline X}$ is given as defined in \secref{sec_sync}. For any given $\epsilon_j >0$, $j\in\{1,\ldots,p\}$ the pair $(\mPi^*,\mGamma^*)$ obtained from the argument of \optref{opt_BMI} guarantees an \textbf{error-bounded optimal synchronization} (EBOSS) for any initial value $\xol(0)\in\set{\overline X}$ of the synchronized trajectory \eqref{eq_syn_traj}. 
\end{theorem}
\begin{proof}
	With respect to \eqref{eq_sylv_ham} and \eqref{eq_opt_gamma}, $(\mPi,\mGamma)$ is clearly constrained to satisfy the conditions in \defrefv{def_syn}{def_opt_syn}.
	
	In view of \eqref{eq_err_bound} and the invariance of $\set{\overline X}$, we only need to satisfy $\xol^\tp\left(\C\mPi^*-\Col\right)^\tp\e_j\e_j^\tp \left(\C\mPi^*-\Col\right) \xol \leq \epsilon_{j}^2$, $\forall j\in\{1,\ldots,p\}$ and any $\xol\in\set{\overline X}$. Applying the $\set{S}$-procedure as in \cite[Sec. 2.6.3]{Boyd1994}, it is sufficient if for each $j\in\{1,\ldots,p\}$ there exists $\tauvec_j\in\RZ^{L+H}$ such that \eqref{eq_tau_pos}, \eqref{eq_tau_eps} and $\X_j-\left(\C\mPi^*-\Col\right)^\tp\e_j\tfrac{1}{\epsilon_{j}^2}\e_j^\tp \left(\C\mPi^*-\Col\right) \succeq \zero $ with $\X_j$ as given above hold. By employing the \textit{Schur-Complement-Lemma}, the latter is equivalently written as \eqref{eq_bound}.	
\end{proof}
Since $(\mPi^*,\mGamma^*)$ leads to OSS due to constraints \eqref{eq_sylv_ham} and \eqref{eq_opt_gamma}, it also solves \optref{opt_min_trace} based on \lemref{lem_expl_lqt}. Comparing the objective of \optref{opt_min_trace} with \eqref{eq_opt_BMI_obj}, it is clear that \optref{opt_BMI} aims at an input-energy efficient optimal control satisfying the given bounds. By introducing a normalization by means of $\P$, which basically equals a change of coordinates by $\P^{-1}\xol$, the information on $a_{l}^\text{max}$ and $\widehat A_h^\text{max}$ is factored in the objective \eqref{eq_opt_BMI_obj}. Thus, $(\mPi^*,\mGamma^*)$ minimizes the average of the stationary input-energy $\int_{t_0}^{t_0+T}\xol(t)^\tp\mGamma^\tp \R \mGamma\xol(t) \text{ d}t$ over all $\xol(t_0)\in\set{\overline X}$ for a period $[t_0,t_0+T]$ starting at any $t_0\geq0$, cf. \cite[Sec. III-C 2)]{Hermann2018}. As a result of OSS, $(\mPi^*,\mGamma^*)$ minimizes \eqref{eq_cost_T}. Hence, any $(\mPi,\mGamma)$ which requires less stationary input-energy for a specific $\xol(t_0)$ must lead to a worse synchronization performance instead.

\remark Sufficient conditions for \assref{assump_EXS} are discussed in \remref{rem_asmp}. For under-actuated systems, \assref{assump_EXS} is typically not satisfied. Then, investigating the smallest $\epsilon_j$ feasible would be interesting which is part of future work. An easy way to use \optref{opt_BMI} still is choosing $\epsilon_j =  l_j\beta$, $0 <l_j, \beta \in \RZ$ $\forall j$ and iteratively lowering $\beta$ as long as \optref{opt_BMI} is solvable. \label{rem_eboss_it}

With respect to solving the proposed \optref{opt_BMI}, we face a difficulty. While the objective can easily be replaced by the linear objective $\min \trace {\Z\P^2}$ for slack variable $\Z \in \RZ^{\nol\times\nol}$ and constraint
\begin{equation*}
	\ma \Z & \mGamma^\tp\\ \mGamma & \R^{-1}  \me \succeq \zero,
\end{equation*}
e.g. see \cite{Boyd1994}, the constraint \eqref{eq_sylv_ham} contains a bilinear term: $-\C^\tp\Q\C\mPi$ in the variables $\Q$ and $\mPi$. Hence, \optref{opt_BMI} is effectively a BMI-problem. These types of problems are non-convex in general and particularly hard to solve which means finding a local minimum \cite{Antwerp2000}.

A way to proceed is using one of the few, freely available numerical solvers which can handle BMI. A possible choice is PENLAB \cite{Fiala2013}. However, the solver could not handle non-diagonal $\Q$, i.e. solutions denoted as ``optimal'' violated constraints. This is unfortunate since non-diagonal $\Q$ can provide better solutions in terms of a smaller objective \eqref{eq_opt_BMI_obj}.

Instead, we present an iterative method known as \textit{path-following}. We follow the basic guidelines of \cite{Ostertag2008}. The key idea is to solve a convex LMI-OP derived by first-order Taylor approximation of \optref{opt_BMI} at a current operating-point (O-P) $k-1$. An O-P is defined by a $\Q^{k-1}$ for which \optref{opt_BMI} is feasible under constraint $\Q=\Q^{k-1}$. This also gives $\mPi^{k-1}$. Then, an optimal perturbation $\Q^{k-1}+\Delta\Q^*$ is chosen by%
\begin{optprob}\label{opt_LMI}
	\begin{subequations}\label{eq_opt_LMI}\begin{align}
		&\min_{\text{\small $\mPi$,$\mGamma$,$\mPi_\lambda$,$\Delta\Q$}} \hspace{.15in}\trace{\mGamma^{\tp}\R\mGamma\P^2} \notag\\ 
		\text{sub} &\text{ject to:} \notag\\
		&\ma \mPi \\ \mPi_\lambda \me \Aol = \ma \A & -\B\R^{-1}\B^\tp \\ -\C^\tp\Q^{k-1}\C & -\A^\tp  \me \ma \mPi \\ \mPi_\lambda \me \notag \\ &\hspace{0.2in}+ \ma \zero \\ \C^\tp \Q^{k-1} \Col-\C^\tp \Delta \Q\left(\C\mPi^{k-1}-\Col\right) \me\text{,} \notag \\
		&\eqref{eq_opt_gamma}\text{, }(\ref{eq_opt_BMI}\text{b-d})\text{,} \notag\\
		&\ma \alpha^{k-1}\Q^{k-1} & \Delta\Q \\ \Delta\Q & \alpha^{k-1}\Q^{k-1} \me\succ\zero\text{,} \label{eq_LMI_alpha_Q}\\
		& \Q^{k-1}+\Delta \Q \succ \zero. \notag
		\end{align}\end{subequations}
\end{optprob}
\vspace{0.1in}
As suggested by \cite{Ostertag2008}, the constraint \eqref{eq_LMI_alpha_Q} guarantees $\alpha^{k-1}\Vert\Q^{k-1}\Vert_2 > \Vert\Delta\Q^*\Vert_2$ with $\alpha^{k-1}>0$ which permits only a local search around the O-P. 

Next, we have to check if \optref{opt_BMI} under additional constraint $\Q^{k}=\Q^{k-1}+\Delta \Q^*$ is feasible which would yield $\mPi^k$, $\mGamma^k$.
Suppose this is true and, in addition, a relative decrease of the objective $\Delta_\text{rel}^k\hspace{-.25mm}=\hspace{-.25mm}1-\;\nicefrac{\trace{\text{$\mGamma^{k\,\tp}\R\mGamma^{k}\P^2$}}}{\trace{\text{$\mGamma^{k-1\,\tp}\R\mGamma^{k-1}\P^2$}}}\hspace{-.25mm}>\hspace{-.25mm}0$ took place. Only then, the new O-P given by $\Q^{k}=\Q^{k-1}+\Delta \Q^*$, $\mPi^k$ is accepted. Otherwise $\Delta \Q^*$ is discarded, i.e. $\Q^{k}=\Q^{k-1}$, $\mPi^k=\mPi^{k-1}$.

Before the next iteration is executed, an adaptation of $\alpha^{k}$ is performed \cite{Ostertag2008}. Due to similarities to trust-region algorithms, a typical adaptation with case analysis can look like
\begin{subnumcases}{\hspace{-0.5cm}\alpha^{k}=}
	 \min{\left(\gamma\alpha^{k-1},\alpha_{\text{max}}\right)}\text{,}\hspace{-0.35cm} & if new O-P accepted \label{eq_alpha_true}\\
	 \delta\alpha^{k-1}\text{,} & if new O-P rejected \label{eq_alpha_false}
\end{subnumcases}
with $\alpha_\text{max}>0$, $\gamma \geq 1$ and $1>\delta > 0$. If the new O-P is accepted, the linearized \optref{opt_LMI} is ``trusted'' with a wider exploration. Otherwise, the trust-region is shrunk by \eqref{eq_alpha_false}, i.e. it is searched more locally. A suitable choice of $\gamma$ and $\delta$ can prevent the algorithm from being attracted to an unacceptable local minimum in the convergence process.

We summarize the proposed procedure in
\begin{algorithm}
	\caption{Path-Following (executed off-line)}
	\label{algo_PF}
	\begin{algorithmic}
		\State{\textbf{Define:} $\underline\Delta_\text{rel}>0$,
		 $k_\text{max}\in\mathbb{N}^+$,} \Comment{stopping criteria}
	 \State{\hspace{.475in}$\alpha_\text{max}>0$, $\gamma \geq 1$, $1>\delta > 0$} \Comment{adaptation setup}
	 \State{\textbf{Find} $\Q^0\succ\zero$ \textbf{such that:}} \Comment{initial operating point (O-P)}
		\State{\hspace{.2in}\optref{opt_BMI} with constraint $\Q = \Q^0$ is feasible, returns $\mPi^0$}
		\State{\textbf{Initialize:} $k=1$, $\Delta_\text{rel}^0=\underline\Delta_\text{rel}$, $\alpha^0 = 0.2$}\Comment{cf. \cite{Ostertag2008}}
		\While{$k \leq k_\text{max} \wedge   \Delta_\text{rel}^{k-1} \geq\underline\Delta_\text{rel}$}\Comment{cf. \cite{Ostertag2008}} 
			\State{\textbf{Solve:} \optref{opt_LMI}, returns $\Delta\Q^*$} \Comment{linearized OP at O-P $k-1$}
			\State{\textbf{Solve:} \optref{opt_BMI} under constraint $\Q=\Q^{k-1}+\Delta\Q^*$,} \State{\hspace{0.95cm}returns $\mPi^k$, $\mGamma^k$} \Comment{Is original OP feasible?}
			\If{feasible $\wedge \text{ } \Delta_\text{rel}^{k} >0$} \Comment{feasible and improvement \text{\ding{51}}}
				\State{$\Q^{k}=\Q^{k-1}+\Delta\Q^*$}
				\State{$\alpha^{k}\leftarrow$ \eqref{eq_alpha_true}} \Comment{adaptation: explorate} 
			\Else \Comment{infeasible or no improvement \text{\ding{55}}}
				\State{$\Q^{k}=\Q^{k-1}$, $\mPi^{k}=\mPi^{k-1}$, $\Delta_\text{rel}^{k}=\Delta_\text{rel}^{k-1}$}
				\State{$\alpha^{k}\leftarrow$ \eqref{eq_alpha_false}} \Comment{adaptation: search more locally}
			\EndIf
		\State{$k\leftarrow k+1$}
		\EndWhile
		\State{\textbf{Return} $\Q^*=\Q^k$, $\mPi^*=\mPi^{k}$, $\mGamma^*=\mGamma^k$} \Comment{EBOSS}
	\end{algorithmic}
\end{algorithm}

\remark Following \cite[Ch. 6]{Anderson2007}, a typical initialization $\Q^0=q\cdot\text{diag}\left(\tfrac{1}{\epsilon_1^2},\ldots,\tfrac{1}{\epsilon_p^2}\right)$ with suitable large $q>0$ should be sufficient to satisfy the constraints of \optref{opt_BMI} in most cases. In order to find a satisfying local minimum, however, trying different $\Q^0$ or several reinitializations may be necessary.
\section{SIMULATION RESULTS}
\label{sec_sim}

In this section we demonstrate our contributions C1-3). We show that optimal synchronization can lead to a satisfying performance even when standard approaches such as \cite{KimShimSeo2011} are infeasible. That is, for a given exosystem (\ref{eq_sshetMAS}d-e) the necessary solvability of the regulator equations \cite{WielandSepulchreAllgoewer2011} is violated and exact synchronization is impossible. In this regard, we consider an under-actuated agent for which $\rank{\B_i} < \rank{\C_i}$. Furthermore, we verify the energy-efficiency of our approach by comparing the energy-consumption of two homogeneous agents where one is affected by significant actuator wear.
\renewcommand{\arraystretch}{1.15}
\setlength{\columnsep}{1pt}
\begin{table}
	\vspace{0.04in}
	\caption{Synchro. strategy, color and property of each agent}
	\centering
	\begin{tabular}{@{\hspace{.1cm}}l@{\hspace{.3cm}}p{.9cm}@{\hspace{.3cm}}p{1.4cm}@{\hspace{.3cm}}p{.9cm}@{\hspace{.3cm}}p{.98cm}@{\hspace{.3cm}}p{1.4cm}@{\hspace{.1cm}}}
		\hline
		\rule{0pt}{10pt} Agent & \circledtext{A}\circledtext{1} & \circledtext{A}\circledtext{2} & \circledtext{B}\circledtext{3} & \circledtext{B}\circledtext{4} & \circledtext{B}\circledtext{5} \\
		\hline
		Synchro. & exact & EBOSS & exact & OSS & EBOSS \\
		Color & \agentred & \agentgold & \agentcyan & \agentgreen & \agentblue \\
		Property & -- & actuator-wear~$12.5\%$ & -- & over-actuated & under-actuated \\
		\hline
	\end{tabular}
	\vspace{-0.04in}
	\label{tab_syn}
\end{table}
\FloatBarrier
For this purpose, two groups of heterogeneous agents are considered. The first group reads
\begin{equation*}
 \text{\circledtext A\Bigg\{\,}\hspace{-.06in} \A_i = \ma -1 & 0 & 5 \\ 
 0 & 0 & 1\\ 
 -5 & 2 & 0 \me\hspace{-.06in}\text{, } \B_i = \beta_i \ma 2 & 2 \\ 0 & 0 \\ 1 & 2\me\hspace{-.06in}\text{, } \C_i= \ma 1 & 0 & 0 \\ 0 & 1 & 0 \me\hspace{-.06in}.
\end{equation*}
While agent \circledtext{1} is in healthy conditions, $\beta_1 = 1$, \circledtext{2} is subject to $12.5\%$ actuator wear, $\beta_2=0.875$. The second group is given by
\begin{equation*}
\text{\circledtext B\Bigg\{\,} \A_i = \ma 0 & 1 & 0 & 0\\
1 & 1 & 0 & 0\\ 0.5 & 1 & 0 & 1\\ 0 & 0.5 & 1.5 & 1 \me \hspace{-.06in}\text{, }\C_i = \ma 1 & 0 & 0 & 0 \\ 0 & 0 & 2 & 0 \me
\end{equation*}
with input matrices
\begin{equation*}
\B_3 = \ma 0 & 0\\ 1 & 0\\ 0 & 1\\ 0 & 0 \me\hspace{-.06in}\text{, }\B_4 = \ma1 & 0 & 0\\ 0 & 1 & 0\\ 0 & 0 & 1\\ 0 & 0 & 0 \me\hspace{-.06in}\text{, }\B_5 = \ma0\\ 1\\ 0\\ 0 \me\hspace{-.06in}.
\end{equation*}
Clearly, we have quadratic agent \circledtext{3}, over-actuated \circledtext{4} and under-actuated \circledtext{5}. The communication is organized in a ring topology: $\rightarrow$\circledtext{1}$\rightarrow$\circledtext{2}$\rightarrow\ldots\rightarrow$\circledtext{5}$\rightarrow$. \assref{assump_spanning_tree} and \ref{assump_ctrb_obsv} apparently hold. All agents are stabilized by linear-quadratic regulators with group-wise similar weightings.

The homogeneous exosystem \eqref{eq_hom_sys} of each agent has the frequency spectrum $\Ool=\{0,0,0.5,2\}$ with $N_{\Ool}=2$ for which \assref{assump_eig_Aol} and \ref{assump_freq_ratio} hold. The time period is $T\approx \SI{12.6}{\second}$. With the multiplicities $m_{\Ool}(0)=2$, $m_{\Ool}(0.5)=1$ and $m_{\Ool}(2)=1$, we have $L=2$, $H=2$. The exosystem is of order $\nol=L+2H=6$ and the output matrix is given by
\begin{equation}\label{eq_Col}
\Col = \ma 1 & 0 & 0 & 1 & 0 & 0 \\ 0 & 1 & 0.2 & 1 & 1 & 0 \me.
\end{equation}
The desired maximal step-heights and amplitudes are defined by $\P=\ma 2.5 & 1.5625 & \tfrac{1}{2}\I_2 & \tfrac{1}{4}\I_2  \me$ as in \eqref{eq_P}. With $\Bol=\I_{\nol}$, the synchronization gain $\Kol$ is obtained from \lemref{lem_homsync} for $\sigma = 0.138$. 

It is left to choose $(\mPi_i,\mGamma_i)$ applying for the individual local synchronization strategy of each agent. \tabref{tab_syn} gives an overview. In case of \circledtext{1} and \circledtext{3}, the classical approach of exact synchronization (EXS) is obtained with $(\mPi_{1/3},\mGamma_{1/3})$ solving the regulator equations \cite{Trentel2001}. The over-actuated \circledtext{4} synchronizes optimally based on \defref{def_syn}.\ref{def_opt_syn}) with $\Q_4=\text{diag}(30,20)$ and $\R_4=\I$. Then, the optimal pair $(\mPi_4^*,\mGamma_4^*)$ can be calculated as in \thmref{thm_opt_syn} right away. Apparently, the over-actuation does not need to be considered explicitly. This is beneficial in comparison to the classical EXS approach. There, the solution of the regulator equations would not be unique and one would have to solve an OP such as given in \cite{Krener1992} which additionally may lead to suboptimal solutions.

Due to the actuator wear, \circledtext{2} desires to save as much input-energy as possible while synchronizing within acceptable bounds $\epsilon_{21}=0.37$ and $\epsilon_{22}=0.28$. With $\R_2=\I$, \optref{opt_BMI} is numerically solved by means of PENLAB which gives diagonal $\Q_2^*=\text{diag}(448.47,391.83)$ and $\trace{\mGamma_2^{*^\tp}\R_2\mGamma_2^*\P^2}=260.68$. Implementing $(\mPi_2^*,\mGamma_2^*)$ leads to an error-bounded optimal synchronization based on \defref{def_syn}.\ref{def_opt_err_syn}).
\begin{figure}
	\vspace{0.04in}
	\centering
	
%
%
\definecolor{mycolor1}{rgb}{1.00000,0.55000,0.00000}
\definecolor{mycolor2}{rgb}{0.40000,0.70000,0.40000}%
\definecolor{mycolor3}{rgb}{1.00000,0.84000,0.00000}
\definecolor{mycolor4}{rgb}{0.46667,0.53333,0.60000}
\begin{tikzpicture}[scale=0.665, every node/.style={scale=1.1}]

\begin{axis}[%
width=4.521in,
height=1.493in,
at={(0.758in,2.554in)},
scale only axis,
unbounded coords=jump,
xmin=0,
xmax=20,
ymin=0,
ymax=3.47,
ylabel style={font=\color{white!15!black}},
ylabel={$y_1$},
axis x line*=bottom,
axis y line*=left,
xmajorgrids,
ymajorgrids
]
\addplot [color=gray, dashed, line width=1.5pt, forget plot, draw opacity=0.7]
table[]{figures/outputs-8.tsv};

\addplot [color=dark_cyan, line width=1.5pt, forget plot]
table[]{figures/outputs-2.tsv};

\addplot [color=red, line width=1.5pt, forget plot]
  table[]{figures/outputs-1.tsv};

\addplot [color=mycolor2, line width=1.5pt, forget plot]
  table[]{figures/outputs-3.tsv};
  
\addplot[color=blue, line width=1.5pt, forget plot]
  table[]{figures/outputs-5.tsv};
  
 \addplot[color=blue, mark=o,mark options={solid}, line width=1.5pt, only marks, forget plot]	
  table[]{figures/outputs-5A.tsv};
  
\addplot[color=mycolor3, line width=1.5pt, forget plot]
  table[]{figures/outputs-4.tsv};
  
  \addplot[color=mycolor3, mark=triangle*,mark options={solid}, line width=1.5pt, only marks, forget plot]	
  table[]{figures/outputs-4A.tsv};

\addplot[thick,color = gray, fill=gray, fill opacity=0.3, domain=0:20,samples=100, draw opacity=0.3] table[]{figures/outputs-6.tsv}
-- (axis cs:20,3.47) -- (axis cs:0,3.47);  
\addplot[thick,color = gray, fill=gray, fill opacity=0.3, domain=0:20,samples=100, draw opacity=0.3] table[]{figures/outputs-7.tsv} \closedcycle;
\end{axis}

\begin{axis}[%
width=4.521in,
height=1.493in,
at={(0.758in,0.8in)},
scale only axis,
xmin=0,
xmax=20,
xlabel style={font=\color{white!15!black}},
xlabel={time in s},
ymin=0,
ymax=2.69,
ylabel style={font=\color{white!15!black}},
ylabel={$y_2$},
axis background/.style={fill=white},
axis x line*=bottom,
axis y line*=left,
xmajorgrids,
ymajorgrids,
legend style={legend cell align=left, align=left, draw=white!15!black}
]
\addplot [color=gray, dashed, line width=1.5pt, forget plot, draw opacity=0.7]
table[]{figures/outputs-16.tsv};

\addplot [color=dark_cyan,line width=1.5pt]
table[]{figures/outputs-10.tsv};

\addplot [color=red,line width=1.5pt]
  table[]{figures/outputs-9.tsv};

\addplot [color=mycolor2,line width=1.5pt]
  table[]{figures/outputs-11.tsv};

\addplot [color=blue,line width=1.5pt]
  table[]{figures/outputs-13.tsv};

\addplot[color=blue, mark=o,mark options={solid}, line width=1.5pt, only marks, forget plot]	
table[]{figures/outputs-13A.tsv};

\addplot [color=mycolor3,line width=1.5pt]
table[]{figures/outputs-12.tsv};

\addplot[color=mycolor3, mark=triangle*,mark options={solid}, line width=1.5pt, only marks, forget plot]	
table[]{figures/outputs-12A.tsv};

  \addplot[thick,color = gray, fill=gray, fill opacity=0.3, domain=0:20,samples=100, draw opacity=0.3] table[]{figures/outputs-14.tsv}
  -- (axis cs:20,2.69) -- (axis cs:0,2.69);  
  \addplot[thick,color = gray, fill=gray, fill opacity=0.3, domain=0:20,samples=100, draw opacity=0.3] table[]{figures/outputs-15.tsv} \closedcycle;
\end{axis}
\end{tikzpicture}%
	\vspace{-0.1in}
	\caption{Outputs of the heterogeneous agents (cf. \tabref{tab_syn})}
	\vspace{-0.01in}
	\vspace{-0.04in}
	\label{fig_outputs}
\end{figure}
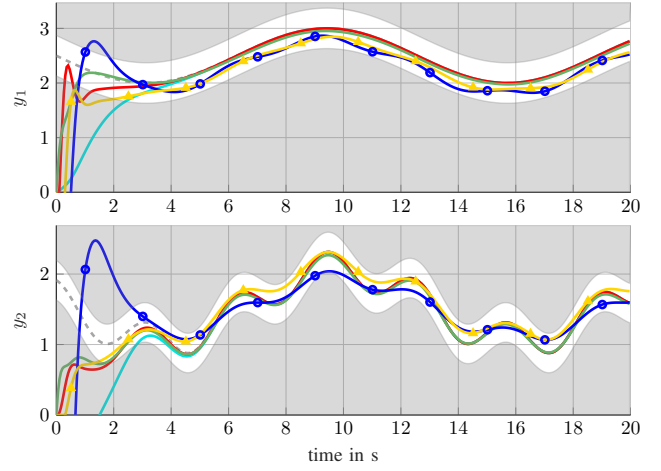
\FloatBarrier
In case of the under-actuated \circledtext{5}, an EXS solution does not exist since \assref{assump_EXS} is violated.
In an iterative manner following \remref{rem_eboss_it}, $\epsilon_{51}=1.7$, $\epsilon_{52}=2.4$ were obtained for which \optref{opt_BMI} with $\R_5=1$ can be solved. Starting at $Q_5^0=\text{diag}(15,16)$, the path following\footnote{The LMI-problems were numerically solved by the help of CVX \cite{Grant2013}.} \algoref{algo_PF} was carried out for $\gamma = 2.5$, $\delta = 0.5$, $\alpha_{\text{max}}=10$. It terminated after $29$ iterations due to $\Delta_\text{rel}^{29} <10^{-4}$. This resulted in {\small$Q_5^{*}=\ma10.8743 & -0.7571\\
-0.7571 &  12.1562\me$} and {\small$\trace{\mGamma_5^{*^\tp}\R_2\mGamma_5^*\P^2}=3.62$}. At this point, we stress that the bounds $\epsilon_{51}=1.7$, $\epsilon_{52}=2.4$ define a worst-case synchronization error. We will see next that even for $\xol(0)$ on the boundary of $\set{\overline X}$ the performance can be quite satisfying.

In the sequel, a simulation example is analyzed. At $t=0$, all agents are at rest and the exosystems are asynchronous such that  $\xol(0)^\tp=\ma 1 & 1 & 1 & 0 & 1 & 0  \me\cdot\P$. Although this requires $\xol_i(0) \not\in\set{\overline X}$ for some $i$, it results $\xol(0)\in\set{\overline X}$. Hence, \circledtext{2} and \circledtext{5} will satisfy the defined bounds and the results of \thmref{thm_opt_err_syn} hold. Then, the synchronization trajectory (\synchtraj) is given by $\yol(t)=\Col \xol(t)$ with \eqref{eq_Col} and
\begin{equation*}
\xol(t)^{\hspace*{-.5mm}\tp} \hspace*{-1mm}=\hspace*{-1mm} \ma 1 & \hspace{-.05in}1 & \hspace{-.05in}\cos(0.5t) & \hspace{-.05in}-\sin(0.5t) & \hspace{-.05in}\cos(2t) & \hspace{-.05in}-\sin(2t) \me\cdot\P.
\end{equation*}
This also shows that the exosystem formulation as required by \eqref{eq_Aol} is rather intuitive.

The results for outputs $y_1$ and $y_2$ are presented in \figref{fig_outputs}. After a transition period $[\SI{0}{\second},\SI{6}{\second}]$, we observe that all agents satisfy the error-bounds $\epsilon_{21}$ and $\epsilon_{22}$, i.e. the stationary trajectories omit the gray area. As expected, \circledtext{1} and \circledtext{3} track $\yol(t)$ asymptotically. Though $\xol(0)$ lies on the boundary of $\set{\overline X}$, the bounds are satisfied by \circledtext{2}. The over-actuated \circledtext{4} tracks $\yol(t)$ very closely while the under-actuated \circledtext{5} also shows a satisfying tracking performance.

The $i$-th agent's stationary input-energy over a period is {\small$J_{u,i}=\tfrac{1}{2}\int_{0}^{T}\xol^\tp\mGamma_i^\tp\R_i\mGamma_i\xol\text{ d}t$}. For group \circledtext{B}, the exact synchronizing \circledtext{3} with {\small$\R_3=\I$} is even less efficient than the under-actuated \circledtext{5}, i.e. $J_{u,3}=41.2>J_{u,5}=35.2>J_{u,4}=21.8$. The energy-consumption of group \circledtext{A} is displayed in \figref{fig_input_energy}. Since \circledtext{2} is not forced to synchronize exactly, it manages $J_{u,2}\approx J_{u,1}\approx760$ despite the $12.5\%$ actuator wear. In case of EXS, unfavorably, additional $32\%$ input-energy would have been necessary, see graph \agentdashgold. In the same manner, \circledtext{1} would have saved $24.4\%$ input-energy, if he had relaxed his synchronization to the acceptable bounds, see graph \agentdashred.
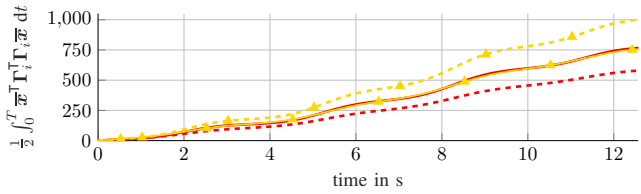
\begin{figure}
	\vspace{0.04in}
	\centering
%
%
\definecolor{mycolor1}{rgb}{1.00000,0.84000,0.00000}%
\begin{tikzpicture}[scale=0.665, every node/.style={scale=1.1}]

\begin{axis}[%
width=4.251in,
height=1.0in,
at={(0.758in,0.453in)},
scale only axis,
xmin=0,
xmax=12.5663706143592,
xlabel style={font=\color{white!15!black}},
xlabel={time in s},
ymin=0,
ymax=1050,
ytick = {0,250,500,750,1000},
ylabel style={font=\color{white!15!black}},
ylabel={$\tfrac{1}{2}\int_{0}^{T} \xol^\tp\mGamma_i^\tp\mGamma_i\xol\text{ d}t$},
axis background/.style={fill=white},
axis x line*=bottom,
axis y line*=left,
xmajorgrids,
ymajorgrids,
legend style={at={(0.157,0.469)}, anchor=south west, legend cell align=left, align=left, draw=white!15!black}
]

\addplot [color=red,line width=1.5pt]
  table[]{figures/input_energy-1.tsv};\label{energy_red}

\addplot [color=red, dashed,line width=1.5pt]
  table[]{figures/input_energy-2.tsv};

\addplot [color=mycolor1,line width=1pt]
table[]{figures/input_energy-3.tsv};

\addplot [color=mycolor1,mark=triangle*,line width=1.5pt,only marks]
table[]{figures/input_energy-3A.tsv};

\addplot [color=mycolor1, dashed,line width=1.5pt]
  table[]{figures/input_energy-4.tsv};

\addplot [color=mycolor1,mark=triangle*,line width=1.5pt,only marks]
table[]{figures/input_energy-4A.tsv};

\end{axis}
\end{tikzpicture}%
	\vspace{-0.125in}
	\caption{Stationary input-energy for group \circledtext{A} with $\R_1=\R_2=\I$}
	\vspace{-0.01in}
	\vspace{-0.04in}
	\label{fig_input_energy}
\end{figure}
\addtolength{\textheight}{-9.5cm} 
\section{CONCLUSION}
\label{sec_concl}
We presented an LQT-based approach for optimal stationary synchronization which can be considered an alternative to exact synchronization. Comparing both, the control structure is completely the same. However, our method shows various advantages. Typical assumptions for the existence of the control could be relaxed and the class of MAS suited for application is extended. It was shown that synchronization within acceptable bounds of the synchronization error allows saving a significant amount of input-energy. Hence, the agents achieve locally an optimized performance. Due to the modularity of our approach, the results can be recommended for application to general infinite-time LQT-problems.




%
%


\bibliographystyle{IEEEtranS} 
\bibliography{literature}

\end{document}